\newtheorem{theorem}{Theorem}[section]
\newtheorem{corollary}[theorem]{Corollary}
\newtheorem{remark}[theorem]{Remark}
\newtheorem{construction}[theorem]{Construction}
\newtheorem{lemma}[theorem]{Lemma}
\newenvironment{proof}[1][Proof]{\textbf{#1.} }{\ \rule{0.5em}{0.5em}}
\def\PP{\mathbb{P}}
\def\F{\mathbb{F}}
\def \mF {\mathcal{F}}
\def \mL {\mathcal{L}}
\def\tP{\textsf{P}}
\def\tR{\textsf{R}}
\def\mL{{\mathcal L}}
\newcommand{\Ga}{\alpha}
\newcommand{\Gb}{\beta}
\newcommand{\Gg}{\gamma}
\newcommand{\Go}{\omega}     \newcommand{\GO}{\Omega}
\def\Supp{{\rm Supp}}
\def \ba {{\bf a}}
\def \bc {{\bf c}}
\def \bx {{\bf x}}
\def \bu {{\bf u}}
\def \bv {{\bf v}}
\def \bo {{\bf 0}}
\def\Supp{{\rm Supp}}
\def\res{{\rm res}}
\def\RS{\mathsf{GRS}}
\title{Explicit MDS Codes with Complementary Duals}
\author{Peter Beelen and Lingfei Jin}
\date{}
\begin{document}
\maketitle

\begin{abstract}
In 1964, Massey introduced a class of codes with complementary duals which are called Linear Complimentary Dual (LCD for short) codes. He showed that LCD codes have applications in communication system, side-channel attack (SCA) and so on. LCD codes have been extensively studied in literature. On the other hand, MDS codes form an optimal family of classical codes which have wide applications in both theory and practice. The main purpose of this paper is to give an explicit construction of several classes of LCD MDS codes, using tools from algebraic function fields. We exemplify this construction and obtain several classes of explicit LCD MDS codes for the odd characteristic case.
\end{abstract}


\section{Introduction}

Linear Complementary Dual (LCD) cyclic codes over finite fields were first introduced  and studied  by  Massey in 1964 and they were called reversible codes in  \cite{Mass64}. LCD codes have been used in many applications. Besides the applications in communication systems, data storage, Carlet and Guilley  \cite{CG14} showed that LCD codes can be used against side-channel attack (SCA). It was shown that the distance of a code with linear complementary dual represents the security degree against SCA. Hence, for application in SCA, larger minimum distance is preferred. MDS codes form an optimal family of classical codes in the sense that the minimum distance achieves the Singleton bound.

In \cite{Mass92}, Massey showed that there exist asymptotically good LCD codes by establishing a relationship between LCD codes and linear codes and raised a question on whether LCD codes can achieve the Gilbert-Varshamov bound. Later, Sendrier showed that LCD codes meet the asymptotic Gilbert-Varshamov bound using the hull dimension spectra of linear codes \cite{Send04}.  In the literature, LCD codes have been studied extensively and many results and properties for LCD cyclic codes were found \cite{DKOSS,EY09,Ding,ML86,TH70,MS,YM94}. In \cite{Jin17}, the problem of constructing LCD MDS codes over finite fields of even characteristic  was solved. For the case where the field characteristic is odd, several results were presented in  \cite{Chen, Jin17,MS,SK}.
After this work was completed, we became aware that there is a contemporaneous paper on LCD MDS codes by Carlet et al. \cite{CMTQ17}. A few days later, it was further showed by them that any linear code is equivalent to an LCD code \cite{CMTQ172}. As a result, the problem of classifying LCD MDS codes was settled completely in \cite{CMTQ17, CMTQ172}. While the landmark result in \cite{CMTQ17} cleverly employs techniques from linear algebra, the current work uses the language of function fields and algebraic geometry codes. This enables us to describe and obtain some LCD MDS codes in a very different way. Because of this, we feel that the current work is still of independent interest and may facilitate future investigations of (some) LCD MDS codes, since the full machinery of function fields is available for these codes. We explicitly construct several classes of LCD MDS codes over finite fields of odd characteristic. The algebraic geometry framework gives rise to a finer analysis as to when a generalized Reed-Solomon code is an LCD code as well. Summarizing, we obtain the following  result.
\begin{theorem}\label{thm:main} Let $q=p^r$ for an odd prime power $q$. Then there exist explicit $q$-ary $[n,k]$-LCD MDS codes for the following ranges of $n$ and $k$.
\begin{itemize}
  \item [(1)] Set $n=p^t$, with $1\le t < r$ and $0\le k\le p^t$ (see Construction \ref {ex:3.6} (i)).
  \item [(2)] Set $n<q$ a positive integer such that $p|n$ and $(n-1)|(q-1)$ and $0\le k \le n$ (see Construction \ref{ex:3.6} (ii)).
  \item [(3)] Set $n=p \ell$, with $\ell$ a divisor of $q-1$ such that $p\ell<q$. If $\ell$ is even, then $k=2\ell-2m$ for any $m$ satisfying $0\le m\le\ell-1$. If $\ell$ is odd, then $k=2\ell-2m$ for any $m$ such that $0 \le m \le \ell$ (see Construction \ref{ex:3.7}).
  \item[(4)] Suppose that $q\ge 5$ is an odd square and let $n<q$ such that $n-1$ is an odd divisor of $q-1$. Then $k=n-2m$ for any $m$ such that $0 \le m\le n/2$ (see Construction \ref{ex:3.8}).
  \item[(5)] Set $n$ to be a divisor of $(q-1)/2$. If $n$ is even (resp.~odd) $k=n-2m$ (resp. $k=n-2m-1$) for any $m$ such that $1 \le m \le n/2$ (see Construction \ref{ex:3.9}).
  \item[(6)] Let $r>1$ and write $N_t=(p^t-1)/(p-1)$. Then we set $n=pN_{r-1}$ and $k$ any integer such that $0 \le k \le pN_{r-1}$ (see Construction \ref{ex:3.10}).
  \item[(7)] Let $t<r$ and choose $1 \le d \le p^{r-t}$ such that $\gcd(d,p)=1$. Then we can set $n=dp^t$ and $k=dp^t-2m$ for any $m$ such that $2 \le m \le p^t/2$ (see Construction \ref{ex:3.11}).
\end{itemize}
\end{theorem}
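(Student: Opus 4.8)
The plan is to derive Theorem~\ref{thm:main} from the seven explicit Constructions~\ref{ex:3.6}--\ref{ex:3.11} that follow; for each of them one must verify two things, that the code is MDS and that it is LCD. I would treat all of them uniformly through the algebraic-geometry description of the codes. Each code is presented as a functional code $C_{\mathcal L}(D,G)$ on a curve of small genus --- the projective line $\mathbb{P}^1$ in most cases --- where $D = P_1 + \cdots + P_n$ collects the evaluation points and $G$ is a divisor whose support is disjoint from that of $D$, chosen so that $\dim\mathcal{L}(G) = k$ and $\deg G < n$; in the genus-zero cases the code is a generalized Reed--Solomon code $\RS_k(\ba,\bv)$ and hence automatically MDS, while in the remaining cases the MDS property is read off from Riemann--Roch together with the geometry of the chosen configuration of points. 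The crucial tool, which I would establish first, is the hull formula: using $C_{\mathcal L}(D,G)^{\perp} = C_{\Omega}(D,G) = C_{\mathcal L}(D,\,D-G+(\omega))$ for a Weil differential $\omega$ with simple poles and residue $1$ at every $P_i$, and writing $H = D - G + (\omega)$, one shows --- provided $\deg\max(G,H) < n$, which guarantees that the relevant evaluation maps are injective --- that $\mathrm{Hull}\big(C_{\mathcal L}(D,G)\big) \cong \mathcal{L}\big(\min(G,H)\big)$. Thus the code is LCD if and only if $\mathcal{L}\big(\min(G,H)\big) = 0$, and the ``finer'' statement about generalized Reed--Solomon codes advertised in the introduction is the genus-zero specialization, in which this vanishing becomes the invertibility of an explicit $k \times k$ matrix assembled from $\ba$, $\bv$ and the formal derivative of $\prod_{j}(x - a_j)$.

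With this criterion available, the seven families come from structured choices of $D$ and $G$ that force $G$ and $H$ to have essentially disjoint support away from one distinguished place, so that $\min(G,H)$ has negative degree --- or at least carries no nonzero function --- and the hull collapses. Concretely, $D$ is built from additive and/or multiplicative subgroups of $\F_q$: the zero set of a linearized polynomial such as $x^{p^t} - x$, or of one of its additive factors, in the purely additive families~(1), (6), (7); a union of cosets of the group of $\ell$-th roots of unity in the multiplicative families~(2), (5), (7); and a combination in the mixed families~(3), (4), for instance the fibers of the $\ell$-th power map over an $\F_p$-line, or a configuration adapted to the group of squares of $\F_q^{\ast}$ when $q$ is a square. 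The multiplier part of $G$ --- equivalently the twisting differential $\omega$ --- is then chosen as the divisor of a low-degree rational function supported away from $D$ whose values supply the column multipliers, and the divisibility arithmetic of linearized polynomials and of binomials makes $\min(G,H)$ transparent to compute. The evenness and square hypotheses in cases~(3), (4), (5) enter precisely here: they decide whether $-1$, or the relevant character value, is a square (or a norm) in the subgroup at hand, which is exactly what controls whether the required multiplier function exists with small enough degree; this is why the admissible dimensions appear in the form $k = n - 2m$.

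Carrying this out amounts to one short lemma per family: fix the curve and the two divisors, check the numerical conditions $\dim\mathcal{L}(G) = k$ and $\deg\max(G,H) < n$, and then either exhibit or rule out a nonzero function in $\mathcal{L}\big(\min(G,H)\big)$ throughout the stated range of $k$; after that the MDS claim is immediate in the genus-zero cases and a one-line count in the others. The main difficulty, as I see it, is structural rather than case-by-case and has two parts: first, getting the hull formula exactly right, since the correspondence between codewords and rational sections degenerates precisely when $\deg\max(G,H) \ge n$, so the statement must be phrased to stay clear of that boundary; and second, in the families carrying a parity or square condition, actually producing the multiplier function of the required degree --- there one is pushed into the finer arithmetic of the subgroup, such as the fact that an index-two multiplicative subgroup consists of squares, and it is the failure of this step outside the stated ranges that accounts for the gaps in the theorem. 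Everything else should be routine polynomial bookkeeping once the divisors have been written down.
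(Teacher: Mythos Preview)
Your overall framework is right and close to the paper's: everything happens on the rational function field, the dual is computed via the differential $\omega_z=\frac{z}{g}\,dx$ with $z=dg/dx$, and the LCD property is a question about the intersection of two functional codes. Your hull identity $\mathrm{Hull}\,C_L(D,G)\cong\mathcal L(G\wedge H)$ (with $H=D-G+(\omega_z)$, valid when $\deg(G\vee H)<n$) is correct and is essentially what underlies Lemma~\ref{lem:3.2}. However, the paper does not organise the argument around that formula. Instead it proves a constructive recipe (Theorem~\ref{thm:3.3}): write the code as $C_L(D,H'-A+(w))$ with $\deg H'=n-1$ and positive divisors $A,B$ of pairwise disjoint support, and require only that $2H'-A-B-(z)-(n-2)P_\infty$ be the divisor of a function $y$ whose values $y(P_i)$ are nonzero squares in $\F_q$. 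Then a $w$ with $w(P_i)^{-2}=y(P_i)$ exists, the dual is exactly $C_L(D,H'-B+(w))$, and disjointness follows from Lemma~\ref{lem:3.2}. In your language this forces $\deg(G\wedge H)=-1$ and $\deg(G\vee H)=n-1$ automatically, but the paper never computes $G\wedge H$ with the twist in place; it separates the duality step from the disjointness step.

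The practical workhorse is Corollary~\ref{cor:3.5}, whose key idea you only hint at: decompose the zero divisor of $z$ as $Z=2X+Y$, absorb the square part $2X$ into a principal divisor $(y)=(f^2)$ so that the square condition is automatic, and distribute the residual $Y=Y_1+Y_2$ between $A$ and $B$; the even/odd splits in item~(3) arise precisely from how much of $Z$ is already twice a divisor. Items~(4) and~(5) instead go through Corollary~\ref{cor:3.4} and verify directly that $(abz)(\alpha_i)$ is a square in $\F_q^*$, which is where the hypotheses ``$q$ a square'' and ``$n\mid (q-1)/2$'' are actually used --- not, as you suggest, to control the degree of a multiplier. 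Two small corrections: every construction in the paper lives on $\mathbb P^1$, so all codes are generalized Reed--Solomon and MDS is automatic --- there are no higher-genus cases to handle; and the paper does not invoke any $k\times k$ matrix criterion, the argument being purely divisorial throughout.
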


\begin{remark}{\rm
Note that part (1)  of Theorem \ref{thm:main} was obtained in \cite[Theorem 3.5]{Chen} using generalized Reed-Solomon codes as well. The parameters in part (5) of Theorem \ref{thm:main} were also obtained in \cite[Theorem 9]{SK}, though using a different construction involving negacyclic codes. Finally, as mentioned before, in \cite{CMTQ17} linear algebra techniques are used to construct LCD MDS codes for all feasible parameters, hence also the above. 
}
\end{remark}
The paper is organized as follows. We first provide the relevant backgrounds on rational function fields and algebraic geometry codes in Section II. In Section III, we consider the construction of LCD MDS codes from algebraic geometry codes over the rational function field. Sufficient conditions are given for two algebraic geometry codes to be disjoint. This gives rise to Theorem \ref{thm:3.3} describing how to construct LCD MDS codes. We then give several constructions that provide the parameters mentioned in Theorem \ref{thm:main}.

\section{Preliminaries}
\subsection{LCD MDS codes}
Throughout this paper, denote by $\F_q$  the finite field of $q$ elements with characteristic $p$. We define the Euclidean inner product of two vectors $\bu=(u_1,\dots,u_n)$ and $\bv=(v_1,\dots,v_n)$ of $\F_q^n$ by $\bu\cdot\bv=\sum_{i=1}^nu_iv_i$. A $q$-ary linear code $C$ of length $n$ is a subspace of $\F_q^n$. The dual code of $C$ is defined by $C^{\perp}=\{\bx\in\F_q^n:\; \bx\cdot\bc=0\; \mbox{for all $\bc\in C$}\}$. $C$ is called  {\it linear complementary dual} (LCD for short) if $C\cap C^{\perp}=\{\bo\}$.

A linear code with length $n$, dimension $k$ and minimum distance $d$ is called {\it maximum distance separable} (MDS for short) if $k+d=n+1$. In this article we study codes that are both LCD and MDS, that is to say, LCD MDS codes.

\subsection{Rational function field}
In this subsection, we briefly review some basic results on function field (in particular, on rational function field). For the details on the theory of function field, the reader may refer to the book \cite{Stich08}.

 Denote by $\mF$ the rational function field $\F_q(x)$ with a transcendental element $x$ over $\F_q$. For an element $\alpha \in \F_q$, the zero place of $x-\alpha$ is denoted by $P_\alpha$ and  its pole place by $P_\infty$. These are all rational places (or places of degree $1$). Furthermore, every place $R\neq P_\infty$ corresponds to a monic irreducible polynomial $r(x)\in\F_q[x]$. The degree of $R$, denoted by $\deg(R)$, is defined to be the degree of $r(x)$. Let $\PP_\mF$ denote the set of places of $\mF$.

A divisor $G$ of $\mF$ is a formal sum $\sum_{R\in\PP_\mF}m_RR$ with only finitely many nonzero $m_R$. The support of $G$ is defined to be $\{R\in\PP_\mF:\; m_R\neq 0\}$.  The degree of $G$ is defined to be $\sum_{R\in\PP_\mF}m_R\deg(R)$. A divisor $G=\sum_{R\in\PP_\mF} m_RR$ is said to be bigger than or equal to other divisor $D=\sum_{R\in\PP_\mF} n_RR$ if $m_R\ge n_R$ for all $R\in\PP_\mF$. A divisor $G=\sum_{R\in\PP_\mF}m_RR$ is said to be effective, denoted by $G\ge 0$ if $m_P\ge 0$ for all $R\in\PP_\mF$. For two divisors $G=\sum_{R\in\PP_\mF} m_RR$ and $D=\sum_{R\in\PP_\mF} n_RR$, we define
\[G\vee D:=\sum_{R\in\PP_\mF} \max\{m_R,n_R\}R,\qquad G\wedge D:=\sum_{R\in\PP_\mF} \min\{m_R,n_R\}R.\]
It is clear that
\begin{equation*}
\deg(G\wedge D) +  \deg(G\vee D)= \deg(G)+\deg(D).
\end{equation*}

Assume that a nonzero polynomial $f(x)\in\F_q[x]$ has the canonical factorization $\Ga\prod_{i=1}^t r_i(x)^{e_i}$ with $\Ga\in\F_q^*$ and pairwise distinct monic irreducible polynomials $r_i(s)$, the principal  divisor $(f)$ of $f(x)$ is $\sum_{i=1}^te_i\deg(R_i)-\deg(f)P_\infty$, where $R_i$ are places corresponding to $r_i(x)$. Now for a rational function $f(x)/g(x)\in\mF$ with $g(x)\neq 0$ and $f(x)\neq0$, the principal divisor $(f/g)$ of $f(x)/g(x)$ is defined to be $(f)-(g)$. For a nonzero function $u\in\mF$, we write $(u)=\sum_{P\in \tP}m_PP-\sum_{R\in\tR}m_RR$, where $\tP$ and $\tR$ are two disjoint subsets of $\PP_\mF$ and $m_P>0$, $m_R> 0$ for all $P\in\tP$ and $R\in\tR$. The divisors $\sum_{P\in \tP}m_PP$ (denoted by $(u)_0$) and $\sum_{R\in\tR}m_RR$ (denoted by $(u)_\infty$) are called zero divisor and pole divisor of $(u)$, respectively. It is well known that $\deg((u)_0)=\deg((u)_{\infty})$. In particular any principal divisor has degree zero. For the rational function field $\mF$ the converse holds: for any divisor $G$ of degree zero, one can find a function $y\in\mF$ such that $(y)=G$.

For a divisor $G$, we define the Riemann-Roch space
\[\mL(G):=\{u\in\mF\setminus\{0\}:\; (u)+G\ge 0\}\cup\{0\}.\]
Then $\mL(G)$ is an $\F_q$-subspace of dimension $\deg(G)+1$ for any divisor of nonnegative degree. If for example $G=mP_\infty$, then $\mL(G)$ is the $(m+1)$-dimensional space of polynomials of degree at most $m$.
It is straightforward to verify that
\begin{equation*}
\mL(G) \cap \mL(H) = \mL(G\wedge H) \ \makebox{and} \ \mL(G)+\mL(H) \subseteq \mL(G\vee H).
\end{equation*}


\subsection{Algebraic geometry codes on the rational function field}
Let us first define generalized Reed-Solomon codes. Let $\ba=(\Ga_1,\dots,\Ga_n)$ with $\Ga_1,\dots,\Ga_n$ being $n$ distinct elements of $\F_q$. Let $\bv=(v_1,\dots,v_n)\in(\F_q^*)^n$ be a vector. For $1\le k\le n$, we define
\begin{equation*}
\RS_k(\ba,\bv):=\{(v_1f(\Ga_1),\dots,v_nf(\Ga_n)):\; f(x)\in\F_q[x],\; \deg(f)\le k-1\}.
\end{equation*}
$\RS_k(\ba,\bv)$ is called a generalized Reed-Solomon code. It is an $[n,k]$-MDS code over $\F_q$.

For convenience, we write $P_i:=P_{\Ga_i}$ and let $D$ be the divisor $\sum_{i=1}^nP_{i}$. Further let $G$ be a divisor such that $\Supp(D)\cap\Supp(G)=\emptyset$. Define the following functional  algebraic geometry code
\begin{equation*}
C_L(D,G):=\{(f(P_{1}),\dots,f(P_{n})):\; f\in\mL(G)\}.
\end{equation*}
It is easy to see that $C_L(D,(k-1)P_\infty)$ is the same as $\RS_k(\ba,\bv)$ with $\bv=(1,\dots,1)$. More generally \cite[Proposition 2.3.3]{Stich08} implies the following lemma.
\begin{lemma}\label{lem:2.1a} If $1\le k\le n$ and $G$ is a divisor such that $\deg(G)=k-1$ and $\Supp(G) \cap \Supp(D) =\emptyset$, then
$C_L(D,G)$ is equal to the generalized Reed-Solomon code $\RS_k(\ba,\bu)$ for some $\bu\in(\F_q^*)^n$. Conversely,  any generalized Reed-Solomon code $\RS_k(\ba,\bv)$ can be realized as a functional code $C_L(D,H)$ for some divisor $H$ of degree $k-1$ with $\Supp(H) \cap \Supp(D) =\emptyset$.
\end{lemma}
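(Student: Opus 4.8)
The plan is to reduce both halves of the statement to the two facts recalled in the preliminaries: on the rational function field every divisor of degree zero is principal, and $\mL(m\Pin)$ is precisely the space of polynomials of degree at most $m$. Everything then comes down to tracking a single ``change of variables'' function, so that $C_L(D,G)$ differs from an ordinary evaluation code only by a diagonal scaling.

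For the forward direction, I would start with $G$ of degree $k-1$ satisfying $\Supp(G)\cap\Supp(D)=\emptyset$. The divisor $G-(k-1)\Pin$ has degree $0$, hence equals $(h)$ for some $h\in\mF\setminus\{0\}$. Its support lies in $\Supp(G)\cup\{\Pin\}$, which is disjoint from $\Supp(D)=\{P_1,\dots,P_n\}$ since each $P_i$ is a finite rational place; thus $h$ has neither a zero nor a pole at any $P_i$, and the residue $h(P_i)$ lies in $\F_q^*$. Next I would observe that $(u)+G\ge 0$ if and only if $(uh)+(k-1)\Pin\ge 0$, i.e.\ if and only if $uh$ is a polynomial of degree at most $k-1$; hence $\mL(G)=h^{-1}\mL((k-1)\Pin)$. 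Writing a general element of $\mL(G)$ as $h^{-1}f$ with $\deg(f)\le k-1$, its image in $C_L(D,G)$ is $(h(P_1)^{-1}f(\Ga_1),\dots,h(P_n)^{-1}f(\Ga_n))$, so $C_L(D,G)=\RS_k(\ba,\bu)$ with $\bu=(h(P_1)^{-1},\dots,h(P_n)^{-1})\in(\F_q^*)^n$.

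For the converse, I would build the function before the divisor. Given $\RS_k(\ba,\bv)$ with $\bv\in(\F_q^*)^n$, let $g(x)\in\F_q[x]$ be the Lagrange interpolation polynomial of degree at most $n-1$ with $g(\Ga_i)=v_i^{-1}$ for $1\le i\le n$. Since every prescribed value is nonzero, no $\Ga_i$ is a zero of $g$, so $\Supp((g))$ is disjoint from $\Supp(D)$. Put $H:=(k-1)\Pin+(g)$; then $\deg(H)=k-1$ and $\Supp(H)\subseteq\Supp((g))\cup\{\Pin\}$ is disjoint from $\Supp(D)$. Repeating the computation above gives $\mL(H)=g^{-1}\mL((k-1)\Pin)$, hence $C_L(D,H)=\{(v_1f(\Ga_1),\dots,v_nf(\Ga_n)):\deg(f)\le k-1\}=\RS_k(\ba,\bv)$.

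I do not expect a genuine obstacle; this is essentially \cite[Proposition 2.3.3]{Stich08} rephrased in the present notation. The only points needing care are that the disjointness hypothesis really does propagate from $G$ (resp.\ from $D$) to the support of $(h)$ (resp.\ to the zero set of the interpolation polynomial), so that evaluation at the $P_i$ is defined and lands in $\F_q^*$, and that $\mL(G)=h^{-1}\mL((k-1)\Pin)$ is an equality of $k$-dimensional $\F_q$-spaces, so that no codewords are lost in the identification.
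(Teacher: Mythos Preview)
Your proof is correct and is precisely the standard argument: reduce to the case $G=(k-1)\Pin$ via a principal divisor, then identify the resulting evaluation code as a diagonally scaled Reed--Solomon code. The paper itself does not prove this lemma but simply cites \cite[Proposition 2.3.3]{Stich08}; your argument is essentially that proposition unpacked in the present notation, so there is no difference in approach to discuss.
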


To define the dual code of an algebraic geometry code, we need to introduce differentials. For a place $R$ corresponding to a monic irreducible polynomial $r(x)\in\F_q[x]$ and a nonzero polynomial $f(x)\in\F_q[x]$, we denote $k$ by $\nu_R(f)$, where $r(x)^k||f(x)$. This evaluation can be extended to any nonzero function $f(x)/g(x)\in\mF$ with $f(x),g(x)\in\F_q[x]$ by defining $\nu_R(f/g)=\nu_R(f)-\nu_R(g)$. In particular, we set $\nu_R(0)=+\infty$. If $R$ is the place $P_\infty$, we define $\nu_{P_\infty}(f/g)=\deg(g)-\deg(f)$.

The differential space of $\mF$ is defined to be
\[\Omega_\mF:=\{fdx:f\in\mF\}.\]
This is a one-dimensional space over $\mF$. For any place $R\neq P_\infty$, we define $\nu_R(fdx)=\nu_R(f)$. Furthermore, we define $\nu_{P_\infty}(fdx)=\nu_{P_{\infty}}(f)-2$. For a nonzero function $f$, the divisor $(fdx)=\sum_{R\in\PP_\mF}\nu_R(fdx)R$ is called a canonical divisor. It is clear that a canonical divisor has degree $-2$.

For a divisor $G$, we define the space
\begin{equation*}
\Omega(G):=\{\omega\in\Omega_\mF\setminus\{0\}:\; (\Go)\ge G\}\cup\{0\}.
\end{equation*}
Then $\Omega(G)$ is an $\F_q$-subspace of  $\Omega_\mF$ of dimension $-\deg(G)-1$ if $\deg(G)\le -2$. For an element $\Ga\in\F_q$ and a function $f$ with $\nu_{P_\Ga}(f)\ge -1$, we can write $f=a_{-1}/(x-\Ga)+a_0+a_1(x-\Ga)+\dots$. The residue of $fdx$, denoted by $\res_{P_\Ga}(fdx)$, is thus defined to be $a_{-1}$.

As before, let $D=\sum_{i=1}^nP_{i}$. For a divisor $G$ with $\Supp(D)\cap\Supp(G)=\emptyset$, the differential algebraic geometry code is defined as  follow
\begin{equation*}
C_\Omega(D,G):=\{(\res_{P_{1}}(\Go),\dots, \res_{P_{n}}(\Go)):\; \Go\in\GO(G-D)\}.
\end{equation*}
The following lemma can be found in \cite[Theorem 2.2.8 and Proposition 2.2.10]{Stich08}.

\begin{lemma}\label{lem:2.3}
Let $D=\sum_{i=1}^nP_{\Ga_i}$. For a divisor $G$ with $\Supp(D)\cap\Supp(G)=\emptyset$, we have the following.
\begin{itemize}
\item[{\rm (i)}] The dual code of $C_L(D,G)$ is $C_\Omega(D,G)$.
\item[{\rm (ii)}] If there exists a differential $\eta$ such that $\nu_{P_{i}}(\eta)=-1$ and $\res_{P_{\Ga_i}}(\eta)=1$ for all $1\le i\le n$. Then $C_\Omega(D,G)=C_L(D,D-G+(\eta))$.
\end{itemize}
\end{lemma}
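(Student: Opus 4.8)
The plan is to prove the two parts separately: part (i) via the Residue Theorem followed by a dimension count, and part (ii) via an explicit isomorphism given by division by $\eta$.

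For part (i), I would first establish the inclusion $C_\Omega(D,G)\subseteq C_L(D,G)^\perp$. Take $f\in\mL(G)$ and $\Go\in\GO(G-D)$, so that $(f)\ge -G$ and $(\Go)\ge G-D$. Multiplying gives $(f\Go)=(f)+(\Go)\ge -D$, so the differential $f\Go$ has at worst simple poles concentrated at the places $P_1,\dots,P_n$ and is regular elsewhere. The Residue Theorem then yields $\sum_{R\in\PP_\mF}\res_R(f\Go)=0$, and since the residue vanishes off $\Supp(D)$ this collapses to $\sum_{i=1}^n\res_{P_i}(f\Go)=0$. The key local computation is that, because each $P_i$ is rational, $f$ is regular at $P_i$ (as $P_i\notin\Supp(G)$) and $\Go$ has at most a simple pole there, whence $\res_{P_i}(f\Go)=f(P_i)\res_{P_i}(\Go)$; expanding $f$ and $\Go$ in the local parameter $x-\Ga_i$ makes this transparent. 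Substituting gives $\sum_{i=1}^n f(P_i)\res_{P_i}(\Go)=0$, which is exactly the orthogonality of the corresponding codewords.

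To upgrade this inclusion to equality I would count dimensions. The evaluation map $\mL(G)\to\F_q^n$ has kernel $\mL(G-D)$, so $\dim C_L(D,G)=\ell(G)-\ell(G-D)$; likewise the residue map $\GO(G-D)\to\F_q^n$ has kernel $\GO(G)$, because a simple pole at a rational place has nonzero residue, so vanishing residues at every $P_i$ force regularity there and hence $(\Go)\ge G$. This gives $\dim C_\Omega(D,G)=\dim\GO(G-D)-\dim\GO(G)$. Using that the genus of $\mF$ is zero, Riemann-Roch reads $\dim\GO(A)=\ell(A)-\deg(A)-1$ for every divisor $A$; feeding this into the two formulas and simplifying, the cross terms cancel and the dimensions sum to $\deg(D)=n$. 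Hence $\dim C_\Omega(D,G)=n-\dim C_L(D,G)=\dim C_L(D,G)^\perp$, and the inclusion becomes an equality.

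For part (ii), the idea is that $\GO_\mF$ is one-dimensional over $\mF$, so dividing by the fixed differential $\eta$ identifies differentials with functions. Concretely, I would send $\Go\mapsto f:=\Go/\eta$. If $\Go\in\GO(G-D)$ then $(f)=(\Go)-(\eta)\ge (G-D)-(\eta)=-(D-G+(\eta))$, so $f\in\mL(D-G+(\eta))$, and this map is an isomorphism with inverse $f\mapsto f\eta$. It remains to check the codewords agree. A local valuation count at $P_i$ gives $\nu_{P_i}(D-G+(\eta))=1-0-1=0$, so $f$ is regular at $P_i$ and $f(P_i)$ is defined; then by the same local residue identity as above together with the hypotheses $\nu_{P_i}(\eta)=-1$ and $\res_{P_i}(\eta)=1$, one gets $\res_{P_i}(\Go)=\res_{P_i}(f\eta)=f(P_i)\res_{P_i}(\eta)=f(P_i)$. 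Thus the residue codeword of $\Go$ equals the evaluation codeword of $f$, proving $C_\Omega(D,G)=C_L(D,D-G+(\eta))$. The main obstacle is the local residue identity $\res_{P_i}(f\Go)=f(P_i)\res_{P_i}(\Go)$ and the accompanying nonvanishing of the residue of a simple pole at a rational place; everything else is the Residue Theorem or bookkeeping with Riemann-Roch in genus zero, and since the identity $\dim\GO(A)=\ell(A)-\deg(A)-1$ holds for all $A$, the degenerate ranges where some space vanishes are automatically subsumed.
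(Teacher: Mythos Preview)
Your proof is correct. Note, however, that the paper does not actually prove this lemma: it simply cites \cite[Theorem 2.2.8 and Proposition 2.2.10]{Stich08}. What you have written is essentially the standard argument from Stichtenoth's book, specialized to the rational function field (genus zero), so there is nothing to compare beyond observing that your Riemann--Roch identity $\dim\Omega(A)=\ell(A)-\deg(A)-1$ is the genus-zero case of the general formula $\dim\Omega(A)=\ell(A)-\deg(A)-1+g$ used in the reference.
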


In \cite[Lem. 2.3.6]{Stich08} a differential satisfying the conditions in \ref{lem:2.3}(ii) is constructed explicitly. Before stating this construction, it is convenient to define
\begin{equation*}
g:=\prod_{i=1}^n(x-\Ga_i) \ \makebox{and} \ z:=\frac{dg}{dx}=\sum_{i=1}^n \prod_{j=1:j\neq i}^n(x-\Ga_j).
\end{equation*}
A first property of the function $z$ is given in the following lemma.
\begin{lemma}\label{lem:2.1}
Let $n\le q$ and let $\Ga_1,\dots,\Ga_n$ be distinct elements of $\F_q$ and define $z$ as above.
Then
$$\deg z =  \left\{
\begin{array}{rl}
n-1  & \makebox{if $p \not| n$,}\\
n-2  & \makebox{if $p | n$ and $\sum_{i=1}^n \alpha_i \neq 0.$}\\
<n-2 & \makebox{otherwise}
\end{array}
\right.$$
Moreover, $z(\Ga_i) \neq 0$ for all $i$.
\end{lemma}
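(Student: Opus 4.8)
The plan is to compute the derivative $z=g'$ explicitly and extract its top two coefficients. Expanding $g$ as a polynomial in $x$,
$$g=\prod_{i=1}^n(x-\alpha_i)=x^n-\Bigl(\sum_{i=1}^n\alpha_i\Bigr)x^{n-1}+(\text{terms of degree}\le n-2),$$
and differentiating term by term gives
$$z=g'=nx^{n-1}-(n-1)\Bigl(\sum_{i=1}^n\alpha_i\Bigr)x^{n-2}+(\text{terms of degree}\le n-3),$$
where all coefficients are read in $\F_q$.

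From this the three cases are immediate. If $p\nmid n$, the leading coefficient $n$ of $z$ is nonzero in $\F_q$, so $\deg z=n-1$. If $p\mid n$, the $x^{n-1}$ term vanishes and hence $\deg z\le n-2$; here the point to note is that $p\mid n$ forces $n-1\equiv-1\pmod p$, so $n-1$ is a nonzero scalar, and therefore the coefficient $-(n-1)\sum_i\alpha_i$ of $x^{n-2}$ vanishes if and only if $\sum_i\alpha_i=0$. This yields $\deg z=n-2$ when $\sum_i\alpha_i\neq0$ and $\deg z\le n-3<n-2$ otherwise. I do not expect a genuine obstacle; the only thing requiring a moment's care is exactly this observation that $p\mid n$ implies $p\nmid(n-1)$, which is what makes the $x^{n-2}$-coefficient governed by $\sum_i\alpha_i$ alone.

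For the final assertion, I would argue that since $\alpha_1,\dots,\alpha_n$ are pairwise distinct, each $\alpha_i$ is a simple root of $g$, so $z(\alpha_i)=g'(\alpha_i)\neq0$. Equivalently, one can evaluate the formula $z=\sum_{\ell=1}^n\prod_{j\neq\ell}(x-\alpha_j)$ recorded in the text at $x=\alpha_i$: every summand with $\ell\neq i$ is divisible by $(x-\alpha_i)$ and so vanishes at $\alpha_i$, leaving $z(\alpha_i)=\prod_{j\neq i}(\alpha_i-\alpha_j)$, a product of nonzero elements of $\F_q$, hence nonzero. (In particular $z\neq0$, so $\deg z$ is well defined throughout.)
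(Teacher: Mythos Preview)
Your proof is correct and follows essentially the same approach as the paper: expand $g$, differentiate, and read off the top two coefficients of $z$, then use simplicity of the roots $\alpha_i$ for the final claim. Your write-up is in fact more careful than the paper's very terse argument, in particular your observation that $p\mid n$ forces $p\nmid(n-1)$ is exactly the point needed to justify the second case.
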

\begin{proof}
Since $z=z(x)=\frac{dg}{dx}$, we see that $z(x)=nz(x)^{n-1}-(n-1)(\sum_i \Ga_i) z(x)^{n-2}+\cdots$. The first part of the lemma now follows. The final statement follows, since $\Ga_i$ is a simple root of $g$ for all $i$ and hence not a zero of $z$.
\end{proof}

With this notation, the construction in \cite[Lem. 2.3.6]{Stich08} can be paraphrased as follows.
\begin{lemma}\label{lem:2.4}
Let $\alpha_1,\dots,\alpha_n$ be distinct elements of $\F_q$, and let $z$ be as in Lemma \ref{lem:2.1}. Then the differential $$\omega_z:=\left(\sum_{i=1}^n \frac{1}{x-\alpha_i}\right)dx=\frac{z}{(x-\alpha_1)\cdots(x-\alpha_n)}dx$$ has divisor
$$W_z:= (z) -D +(n-2)P_{\infty}.$$
Moreover $\res_{P_i}(\omega_z)=1$ for all $i$ between $1$ and $n$.
\end{lemma}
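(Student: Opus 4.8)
The plan is to compute $(\omega_z)$ directly from the definition of the divisor of a differential. Write $\omega_z = (z/g)\,dx$ with $g = \prod_{i=1}^n(x-\alpha_i)$. First I would note that for any nonzero function $f$ one has $(f\,dx) = (f) + (dx)$, and that on the rational function field the conventions set up above force $(dx) = -2P_\infty$: for every finite place $R$, $\nu_R(dx) = \nu_R(1) = 0$, while $\nu_{P_\infty}(dx) = \nu_{P_\infty}(1) - 2 = -2$.

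Next I would compute the principal divisor of $g$. As $\alpha_1,\dots,\alpha_n$ are distinct, $g$ has a simple zero at each $P_i := P_{\alpha_i}$ and a single pole of order $\deg g = n$ at $P_\infty$, so $(g) = D - nP_\infty$. (Also $z \neq 0$, since Lemma \ref{lem:2.1} gives $z(\alpha_i) \neq 0$, so $\omega_z$ is a genuine nonzero differential and its divisor makes sense.) Putting these together, $(\omega_z) = (z) - (g) + (dx) = (z) - (D - nP_\infty) - 2P_\infty = (z) - D + (n-2)P_\infty = W_z$; as a consistency check, both sides have degree $0 - n + (n-2) = -2$, as a canonical divisor should.

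For the residues, fix $i$. By Lemma \ref{lem:2.1}, $z(\alpha_i)\neq 0$, while $g$ vanishes simply at $\alpha_i$, so $\nu_{P_i}(z/g) = -1$ and $\res_{P_i}(\omega_z)$ is defined. Using the expression $\omega_z = \big(\sum_{j=1}^n \tfrac{1}{x-\alpha_j}\big)dx$ and expanding around $P_i$: the term $\tfrac{1}{x-\alpha_i}$ is already in the normal form $a_{-1}/(x-\alpha_i) + a_0 + a_1(x-\alpha_i) + \cdots$ with $a_{-1} = 1$, and each remaining summand $\tfrac{1}{x-\alpha_j}$ with $j\neq i$ is regular at $P_i$ (because $\alpha_j \neq \alpha_i$) and thus contributes nothing to the coefficient of $(x-\alpha_i)^{-1}$; hence $\res_{P_i}(\omega_z) = 1$. (Equivalently, $\res_{P_i}\big(\tfrac{z}{g}dx\big) = z(\alpha_i)/g'(\alpha_i) = z(\alpha_i)/z(\alpha_i) = 1$.) No real obstacle is expected here: the whole argument is unwinding definitions, with the only points worth a word being the convention $\nu_{P_\infty}(dx)=-2$ and the input $z(\alpha_i)\neq 0$, both already in hand.
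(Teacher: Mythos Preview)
Your argument is correct. The paper does not actually supply a proof of this lemma; it simply cites \cite[Lem.~2.3.6]{Stich08} and restates the construction in the present notation. Your direct computation---using $(f\,dx)=(f)+(dx)$, the identity $(dx)=-2P_\infty$ coming from the conventions $\nu_R(dx)=\nu_R(1)$ for finite $R$ and $\nu_{P_\infty}(dx)=\nu_{P_\infty}(1)-2$, together with $(g)=D-nP_\infty$---is exactly the standard verification one would give, and matches what is in Stichtenoth. The residue calculation via the partial-fraction form $\sum_j 1/(x-\alpha_j)$ is likewise the intended route. Nothing is missing.
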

It will be convenient to write $Z$ for the divisor of zeroes of $z$. With this notation, we have $(z)=Z-(\deg(z)) P_\infty.$ Lemma \ref{lem:2.1} implies that $\deg(Z)=\deg(z) \le n-1$ as well as that $\Supp(Z) \cap \Supp(D) = \emptyset.$
Combining Lemmas \ref{lem:2.3} and \ref{lem:2.4}, we obtain the following result.

\begin{lemma}\label{lem:2.5}
Let $D$, $G$ be as defined in Lemma \ref{lem:2.4}. Then the dual of $C_L(D,G)$ is $C_L(D,(z)-G+(n-2)P_\infty)$.
\end{lemma}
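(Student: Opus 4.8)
The plan is to obtain the statement by combining Lemma \ref{lem:2.3} with Lemma \ref{lem:2.4}; no new ideas are needed. First I would apply Lemma \ref{lem:2.3}(i) to identify the dual of $C_L(D,G)$ with the differential code $C_\Omega(D,G)$. The remaining task is then purely to re-express $C_\Omega(D,G)$ as a functional code, and Lemma \ref{lem:2.3}(ii) is exactly the tool for that, provided we can supply a differential $\eta$ with $\nu_{P_i}(\eta)=-1$ and $\res_{P_i}(\eta)=1$ for all $i$.

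The natural candidate is $\eta=\omega_z$ from Lemma \ref{lem:2.4}. That lemma gives $\res_{P_i}(\omega_z)=1$ directly, so I only need to check $\nu_{P_i}(\omega_z)=-1$. This can be read off the divisor $W_z=(z)-D+(n-2)P_\infty$: since $P_i\notin\Supp(z)$ (equivalently $\Supp(Z)\cap\Supp(D)=\emptyset$, as noted just after Lemma \ref{lem:2.4}) and $P_i\neq P_\infty$, the coefficient of $P_i$ in $W_z$ is exactly $-1$. Hence Lemma \ref{lem:2.3}(ii) applies with $\eta=\omega_z$ and yields $C_\Omega(D,G)=C_L(D,\,D-G+(\omega_z))$.

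Finally I would substitute $(\omega_z)=W_z=(z)-D+(n-2)P_\infty$ into $D-G+(\omega_z)$; the $D$ and $-D$ cancel and we are left with $(z)-G+(n-2)P_\infty$, which is the asserted description. Along the way I would remark that $\Supp\big((z)-G+(n-2)P_\infty\big)\cap\Supp(D)=\emptyset$, so that $C_L(D,\cdot)$ is indeed defined here; this follows from $\Supp(G)\cap\Supp(D)=\emptyset$, from $\Supp(Z)\cap\Supp(D)=\emptyset$, and from $P_\infty\notin\Supp(D)$. There is no genuine obstacle: the only point requiring care is the verification of the hypotheses of Lemma \ref{lem:2.3}(ii), namely the pole of order exactly one and residue one at each $P_i$, and this is precisely what Lemma \ref{lem:2.4} was arranged to guarantee.
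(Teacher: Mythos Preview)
Your proposal is correct and follows exactly the approach the paper indicates: the paper simply states that the lemma is obtained by combining Lemmas \ref{lem:2.3} and \ref{lem:2.4}, and you have spelled out precisely that combination, including the verification that $\omega_z$ satisfies the hypotheses of Lemma \ref{lem:2.3}(ii) and the cancellation of $D$ in $D-G+(\omega_z)$.
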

Using the divisor $Z$, we may also write that $C_L^\perp(D,G)=C_L(D,Z-G+(n-2-\deg(z))P_\infty)$.


\section{Construction of LCD MDS codes}
In this section we construct several classes of LCD MDS codes. We first give a sufficient condition under which two functional algebraic geometry codes are disjoint.

\begin{lemma}\label{lem:3.2}
Assume that $A,B$ are two positive divisors such that
 \begin{itemize}
   \item [(i)] $\Supp(A)$, $\Supp(B)$, $\Supp(D)$ and $\Supp(H)$ are pairwise disjoint;
   \item [(ii)] $\deg(D)=n>\deg(H)$ and $\deg(A)+\deg(B)>\deg(H)$.
 \end{itemize}
Further let $w \in \mathbb{F}_q(x)$ be a function satisfying $v_{P_{\Ga_i}}(w)=0$ for all $i$ between $1$ and $n$. Then the codes $C_L(D,H-A+(w))$ and $C_L(D,H-B+(w))$ are disjoint.
 \end{lemma}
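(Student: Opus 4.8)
The plan is to show that the intersection of the two functional codes is trivial by reducing it to a statement about Riemann--Roch spaces, and then using the degree count in hypothesis (ii). First I would translate the codes into a more tractable form. Since multiplying all coordinates by the nonzero constants $w(P_{\alpha_i})$ is a coordinate-wise scaling, and since $v_{P_{\alpha_i}}(w)=0$ guarantees $w(P_{\alpha_i}) \in \F_q^*$, the map $f \mapsto wf$ induces an isomorphism $\mL(H-A+(w)) \to \mL(H-A)$ compatible with evaluation at $D$ up to this diagonal scaling; more precisely, $C_L(D,H-A+(w))$ and $C_L(D,H-A)$ coincide after the same diagonal rescaling, and likewise for $B$. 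Hence it suffices to prove that $C_L(D,H-A) \cap C_L(D,H-B) = \{\bo\}$. (Strictly, one should be a little careful: the isomorphism $\mL(H-A+(w)) \cong \mL(H-A)$ does not by itself identify the images in $\F_q^n$, but composing with the diagonal matrix $\mathrm{diag}(w(P_i))$ does, and the same diagonal matrix works for both codes, so the intersection being trivial is equivalent before and after.)

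Next I would exploit the general fact, stated in the preliminaries, that $\mL(G_1)\cap\mL(G_2)=\mL(G_1\wedge G_2)$. Applying this with $G_1=H-A$ and $G_2=H-B$ and using that $A,B$ have disjoint supports (hypothesis (i)) and both are positive, we get $(H-A)\wedge(H-B)=H-A-B$ on the supports of $A$ and $B$, and equals $H$ elsewhere; in any case $(H-A)\wedge(H-B) = H-A-B$ as divisors since $\Supp(A)\cap\Supp(B)=\emptyset$. Therefore $\mL(H-A)\cap\mL(H-B)=\mL(H-A-B)$. By hypothesis (ii), $\deg(H-A-B)=\deg(H)-\deg(A)-\deg(B)<0$, so $\mL(H-A-B)=\{0\}$. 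This forces the intersection of the two codes to be $\{\bo\}$, provided the evaluation maps are injective on the relevant spaces — but this is automatic here because any function in $\mL(H-A)\cap\mL(H-B)$ that evaluates to zero at all of $P_1,\dots,P_n$ lies in $\mL(H-A-D)$ (and similarly), and more simply: if $c \in C_L(D,H-A)\cap C_L(D,H-B)$, write $c$ as the evaluation of $f\in\mL(H-A)$ and of $g\in\mL(H-B)$; then $f-g$ vanishes at all $P_i$, so $f-g\in\mL(H-A \vee \ldots)$ — wait, cleaner: $f-g \in \mL((H-A)\vee(H-B)) \cap \mL(-D)$-type argument is awkward, so instead I would just directly observe that $\deg(H)<n=\deg(D)$ (first part of (ii)) makes the evaluation map $\mL(H-A)\to\F_q^n$ injective (a nonzero function in a space of pole degree $<n$ cannot have $n$ distinct zeroes), and likewise for $\mL(H-B)$; hence the two realizations $f$ and $g$ of a common codeword must be equal, so $f=g\in\mL(H-A)\cap\mL(H-B)=\{0\}$, giving $c=\bo$.

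The main obstacle, and the only place that needs genuine care, is the bookkeeping around the twist by $w$: one must verify that $f \mapsto wf$ really does send $\mL(H-A+(w))$ to $\mL(H-A)$ (this is immediate from the definition of Riemann--Roch spaces) and, more importantly, that the corresponding codes are related by a single fixed diagonal matrix independent of which of $A$ or $B$ we use, so that triviality of one intersection is equivalent to triviality of the other. Since $w$ has no zero or pole at any $P_i$, the scalars $w(P_i)$ are well-defined and nonzero, and the identity $(wf)(P_i) = w(P_i)f(P_i)$ does the job uniformly. Everything else is the degree inequality in (ii) together with the two Riemann--Roch identities already recorded in Section~II, so once the twist is dispatched the argument is short.
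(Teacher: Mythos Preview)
Your proposal is correct and follows essentially the same argument as the paper: twist by $w$ to pass from $\mL(H-A+(w))$ and $\mL(H-B+(w))$ to $\mL(H-A)$ and $\mL(H-B)$ via a single diagonal rescaling, use $\deg(H)<n$ to force the two preimages of a common codeword to coincide, and then conclude from $\mL(H-A)\cap\mL(H-B)=\mL(H-A-B)=\{0\}$ via the degree bound $\deg(A)+\deg(B)>\deg(H)$. The paper's write-up is more compact (it works directly with $f_1\in\mL(H-A)$, $f_2\in\mL(H-B)$ and notes $f_1-f_2\in\mL(H-D)=\{0\}$), but the logic is identical.
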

 \begin{proof}
Write $v_i=w^{-1}(P_{i})$. Then for all $i$ we have $v_i \neq 0$. Now assume that $\bc \in C_L(D,H-A+(w)) \cap C_L(D,H-B+(w)).$ Then there exist $f_1\in \mL(H-A)$ and $f_2\in \mL(H-B)$ such that $(v_1f_1(P_1),\dots,v_nf_1(P_n))=\bc=(v_1f_2(P_1),\dots,v_nf_2(P_n)).$ Since for all $i$ we have $v_i \neq 0$ and $\Supp(D) \cap \Supp(H)=\emptyset$, we have $f_1-f_2\in L(H-D)$. Hence $f_1-f_2=0$, since $\deg(H)<n$. Thus, $f_1=f_2\in \mL(H-A) \cap \mL(H-B)= \mL((H-A)\wedge (H-B))=\mL(H-A-B)=\{0\}$. Here we used the assumption that $\deg(A)+\deg(B)>\deg(H)$ in the last equality.
 \end{proof}

Now we are going to construct LCD MDS codes. The idea is to use Lemma \ref{lem:3.2} for suitably chosen divisors $H$, $A$ and $B$ and a function $y$ such that $C_L^{\perp}(D,H-A+(y))=C_L(D,H-B+(y))$.

\begin{theorem}\label{thm:3.3}
Let $D=\sum_{i=1}^nP_i$, $H$ a divisor and $A,B$ two positive divisors such that:
\begin{enumerate}
 \item[(i)] $\Supp(A)$, $\Supp(B)$, $\Supp(D)$ and $\Supp(H)$ are pairwise disjoint;
 \item[(ii)] $\deg(H)=n-1$ and
 \item[(iii)] $2H-A-B-(z)-(n-2)P_{\infty}$ equals $(y)$ for some element $y\in \F_q(x)$ such that $y(P_{i})$ are squares in $\F_q^*$ for all $1\le i\le n$.
\end{enumerate}
Then for any $w \in \F_q(x)$ such that $w(P_i)^{-2}=y(P_i)$ for all $1\le i\le n$, the code $C_L(D,H-A+(w))$ is an $[n,n-\deg(A)]$ LCD MDS code.
\end{theorem}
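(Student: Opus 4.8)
The plan is to treat the two required properties separately. The MDS property will follow almost immediately from the generalized Reed--Solomon realization in Lemma \ref{lem:2.1a}, while the LCD property will come from a divisor identity forced by hypothesis (iii), followed by an application of Lemma \ref{lem:3.2}.

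First I would dispose of well-definedness and the MDS claim. Since each $y(P_i)$ is a nonzero square and $w(P_i)^{-2}=y(P_i)$, the function $w$ has no zero or pole at any $P_i$, i.e.\ $v_{P_i}(w)=0$; together with (i) this gives $\Supp(H-A+(w))\cap\Supp(D)=\emptyset$, so $C:=C_L(D,H-A+(w))$ is well defined (and likewise $C_L(D,H-B+(w))$). Taking degrees in (iii) and using that $(z)$ and $(y)$ are principal divisors (hence of degree $0$) yields $\deg(A)+\deg(B)=2(n-1)-(n-2)=n$, so $\deg(H-A+(w))=n-1-\deg(A)$. By Lemma \ref{lem:2.1a}, $C$ is then a code $\RS_{n-\deg(A)}(\ba,\bu)$ for some $\bu\in(\F_q^*)^n$, hence an $[n,n-\deg(A)]$ MDS code.

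For the LCD property I would first compute $C^\perp$. By Lemma \ref{lem:2.5}, $C^\perp=C_L(D,(z)-H+A-(w)+(n-2)P_\infty)$; substituting $(z)=2H-A-B-(n-2)P_\infty-(y)$ from (iii) collapses the defining divisor to $H-B-(y)-(w)$, so $C^\perp=C_L(D,H-B-(y)-(w))$. The crucial observation is that $H-B-(y)-(w)=(H-B+(w))+(h)$ with $h:=y^{-1}w^{-2}$, so multiplication by $h^{-1}$ is an isomorphism $\mL(H-B+(w))\to\mL(H-B-(y)-(w))$; but $h(P_i)=y(P_i)^{-1}w(P_i)^{-2}=w(P_i)^{2}\cdot w(P_i)^{-2}=1$ for every $i$, so this isomorphism induces the identity on the evaluated codes. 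Hence $C^\perp=C_L(D,H-B+(w))$ exactly. Now Lemma \ref{lem:3.2} applies with the present $H,A,B$ and $w$: hypothesis (i) is given, $v_{P_i}(w)=0$ has been checked, and $\deg(D)=n>n-1=\deg(H)$ together with $\deg(A)+\deg(B)=n>n-1=\deg(H)$ gives hypothesis (ii). Lemma \ref{lem:3.2} then says that $C$ and $C_L(D,H-B+(w))=C^\perp$ are disjoint, i.e.\ $C\cap C^\perp=\{0\}$, so $C$ is LCD, which completes the proof.

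The real content is the recognition that condition (iii), and in particular the requirement that the $y(P_i)$ be squares, is precisely what is needed to make the two codes appearing in Lemma \ref{lem:3.2} equal to $C$ and $C^\perp$: the squareness forces the diagonal rescaling relating $C_L(D,H-B-(y)-(w))$ to $C_L(D,H-B+(w))$ to reduce to a scalar, indeed to the identity. The only other place where the hypotheses must cooperate is the degree count $\deg(A)+\deg(B)=n$, which is exactly the strict inequality over $\deg(H)$ that Lemma \ref{lem:3.2} demands; once these two points are seen, no genuine calculation remains. The degenerate cases $\deg(A)\in\{0,n\}$ (the full space and the zero code) are trivially MDS and LCD and need no separate treatment.
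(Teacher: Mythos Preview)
Your proof is correct and follows essentially the same approach as the paper's: compute $\deg(A)+\deg(B)=n$ from (iii), identify $C^\perp$ via Lemma \ref{lem:2.5} and the substitution from (iii), use the principal-divisor rescaling (your $h=y^{-1}w^{-2}$ is exactly the inverse of the paper's $yw^2$) to rewrite $C^\perp$ as $C_L(D,H-B+(w))$, and then invoke Lemma \ref{lem:3.2}. Your explicit verification of well-definedness and the MDS property via Lemma \ref{lem:2.1a} is a small addition, not a different route.
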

\begin{proof} First of all, note that the assumptions (ii) and (iii) imply that $\deg(A)+\deg(B)=n$, since any principal divisor has degree zero. Further note that the support of $2H-A-B-(z)-(n-2)P_{\infty}$ is disjoint with $\Supp(D)$.

By Lemma \ref{lem:3.2}, we know that $ C_L(D,H-A+(w))$ and $C_L(D,H-B+(w))$ are disjoint. It is now sufficient to show that the dual of $ C_L(D,H-A+(w))$ is exactly $C_L(D,H-B+(w))$. By Lemmas \ref{lem:2.5}, the dual of $C_L(D,H-A+(w))$ is \[C_L(D,-H+A-(w)+(z)+(n-2)P_{\infty})=C_L(D,H-B-(w)-(y))= C_L(D,H-B+(w)-(yw^2)).\]

We claim that $C_L(D,H-B+(w)-(yw^2))=C_L(D,H-B+(w))$. There is a natural isomorphism of vector spaces between the Riemann--Roch spaces $L(H-B+(w)-(yw^2))$ and $L(H-B+(w))$ sending $f \in L(H-B+(w)-(yw^2))$ to $f/(yw^2) \in L(H-B+(w))$. This map induces an isomorphism of codes $\phi: C_L(D,H-B+(w)-(yw^2)) \rightarrow C_L(D,H-B+(w))$ defined by
$$\phi(f(P_1),\dots,f(P_n))=\left(\frac{f}{yw^2}\left(P_1\right),\dots,\frac{f}{yw^2}\left(P_n\right)\right).$$
However, since $(yw^2)(P_i)=1$ for all $1\le i\le n$, we have $\phi(f(P_1),\dots,f(P_n))=(f(P_1),\dots,f(P_n))$. Hence $C_L(D,H-B+(w)-(yw^2))=C_L(D,H-B+(w))$ as claimed.
This completes the proof.
\end{proof}

Note that by the assumption that $y(P_i)$ is a non-zero square for all $i$, an element $w$ satisfying $w(P_i)^{-2}=y(P_i)$ exists. Moreover, the code $C_L(D,H-A+(w))$ does not depend on the choice of $w$.

By setting $H=(n-1)P_\infty$ in Theorem \ref{thm:3.3}, we obtain the main result of \cite{Jin17} as a corollary.
\begin{corollary}\label{cor:3.4}
Let $a(x),b(x)$ be two co-prime polynomials. Let $A, B$ be the zero divisors of $a(x)$ and $b(x)$, respectively.  Assume that
\begin{enumerate}
\item[(i)] $\Supp(D)$ is disjoint with both $\Supp(A)$ and $\Supp(B)$;
\item[(ii)] $\deg(a(x))+\deg(b(x))=n$ and
\item[(iii)] $(abz)(P_{i})$ are square elements of $\F_q^*$ for all $1\le i\le n$.
\end{enumerate}
Then for any element $w \in \F_q(x)$ such that $w(P_i)^{2}=(abz)(P_i)$ for all $1\le i\le n$, the code $C_L(D,(n-1)P_\infty-A+(w))$ is an LCD MDS code.
 \end{corollary}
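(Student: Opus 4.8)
The plan is to obtain Corollary \ref{cor:3.4} as a direct specialization of Theorem \ref{thm:3.3} by taking $H=(n-1)P_\infty$, and then to check that under this choice the three hypotheses of the theorem collapse exactly to the three hypotheses stated in the corollary. So first I would set $H=(n-1)P_\infty$, which automatically has degree $n-1$, giving condition (ii) of Theorem \ref{thm:3.3} for free; and I would let $A$ and $B$ be the zero divisors of the coprime polynomials $a(x)$ and $b(x)$ as in the corollary. Note that $A$ and $B$ are effective (positive) divisors with $\deg(A)=\deg(a)$ and $\deg(B)=\deg(b)$, and since $a$ and $b$ are coprime their supports are disjoint; since $\Supp(D)$ avoids both by hypothesis (i) of the corollary, and $\Supp(H)=\{P_\infty\}$ while all the $P_i$ are finite places and $a,b$ are polynomials (so $P_\infty\notin\Supp(A)\cup\Supp(B)$), all four supports $\Supp(A),\Supp(B),\Supp(D),\Supp(H)$ are pairwise disjoint. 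That is condition (i) of Theorem \ref{thm:3.3}.

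Next I would verify condition (iii) of the theorem. We must produce $y\in\F_q(x)$ with $(y)=2H-A-B-(z)-(n-2)P_\infty$ and with $y(P_i)$ a nonzero square for each $i$. Since $(a)=A-\deg(a)P_\infty$ and $(b)=B-\deg(b)P_\infty$, and $\deg(a)+\deg(b)=n$ by hypothesis (ii) of the corollary, one computes
\[
2(n-1)P_\infty - A - B - (z) - (n-2)P_\infty = nP_\infty - (a) - \deg(a)P_\infty - (b) - \deg(b)P_\infty + nP_\infty \cdots
\]
— more cleanly: the divisor $2H-A-B-(z)-(n-2)P_\infty$ has degree $2(n-1)-\deg(a)-\deg(b)-\deg(z)-(n-2) = n - \deg(z)$, wait, this should be $0$; indeed using $(z)=Z-\deg(z)P_\infty$ and $\deg(A)+\deg(B)=n$ we get degree $2(n-1) - n - 0 - (n-2) = 0$, so a function $y$ with this principal divisor exists in the rational function field. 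Concretely I would take $y = c/(a(x)b(x)z(x))$ for a suitable constant $c\in\F_q^*$: then $(y) = -(ab z) = -(A + B + Z) + (\deg(a)+\deg(b)+\deg(z))P_\infty = -A-B-Z+(n+\deg(z))P_\infty$, and since $(z)=Z-\deg(z)P_\infty$ we have $-A-B-(z)+nP_\infty = -A-B-Z+(n+\deg z)P_\infty$, which matches $2H-A-B-(z)-(n-2)P_\infty = -A-B-(z) + nP_\infty$. Good. Then $y(P_i) = c/(a b z)(P_i)$, which is a nonzero square in $\F_q^*$ precisely when $(abz)(P_i)$ is — here I would absorb the freedom in $c$ (or simply take $c=1$) so that the condition becomes exactly hypothesis (iii) of the corollary, noting $z(\Ga_i)\neq 0$ and $a(\Ga_i),b(\Ga_i)\neq0$ by the disjointness of supports, so the $P_i$-evaluations are indeed nonzero.

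Finally, Theorem \ref{thm:3.3} asks for $w\in\F_q(x)$ with $w(P_i)^{-2}=y(P_i)$, i.e.\ $w(P_i)^2 = y(P_i)^{-1} = (abz)(P_i)$ (with $c=1$), which is exactly the hypothesis on $w$ in the corollary, and such a $w$ exists because $(abz)(P_i)$ is a nonzero square for every $i$. The theorem then yields that $C_L(D,H-A+(w)) = C_L(D,(n-1)P_\infty - A + (w))$ is an $[n,n-\deg(A)]=[n,n-\deg(a)]$ LCD MDS code, which is the conclusion of the corollary. I do not expect any genuine obstacle here; the only point requiring a little care is the bookkeeping of the principal divisor identity and making sure the constant $c$ can be chosen (or ignored) so that the square condition in the theorem matches verbatim the square condition on $(abz)(P_i)$ in the corollary — in particular one should double-check that $y(P_i)$ square is equivalent to $y(P_i)^{-1}$ square, which is immediate since inversion preserves squares in $\F_q^*$.
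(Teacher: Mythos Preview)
Your proposal is correct and follows exactly the paper's approach: set $H=(n-1)P_\infty$ in Theorem~\ref{thm:3.3} and observe that $2H-A-B-(z)-(n-2)P_\infty$ is the principal divisor of $1/(abz)$, so that $y=1/(abz)$ has $y(P_i)$ a nonzero square precisely when $(abz)(P_i)$ is one. Your write-up simply supplies more of the bookkeeping (pairwise disjointness of supports, the degree check, the equivalence $w(P_i)^{-2}=y(P_i)\Leftrightarrow w(P_i)^2=(abz)(P_i)$) than the paper's terse two-line proof, but the argument is the same.
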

\begin{proof} In Theorem \ref{thm:3.3}, set $H=(n-1)P_\infty$.  Then $2H-A-B-(z)-(n-2)P_{\infty}$ is the principal divisor of $1/(abz)$. The desired result now follows from Theorem \ref{thm:3.3}.
\end{proof}

One way to make sure that the function $y$ in Theorem \ref{thm:3.3} satisfies that $y(P_i)$ is a non-zero square for all $i$ is by making sure that $y$ itself is a square. This is the main idea behind the following corollary.

\begin{corollary}\label{cor:3.5} Let $m\ge 0$ and $3 \le n \le q$. Further, let $X$ and $Y$ be positive divisors such that $Z=2X+Y$. Finally let $Y_1$ and $Y_2$ be disjoint positive divisors such that $Y=Y_1+Y_2$.
 \begin{itemize}
\item[{\rm (i)}]  There exists a $q$-ary $[n, n-2m-\deg(Y_1)]$-LCD MDS code for any $2\le m\le (n-\deg(Y))/2$ or $m=0$;
\item[{\rm (ii)}]  If there exists a place $P$ of degree one not in $\Supp(D) \cup \Supp(Y_2) \cup \{P_\infty\}$ and $\deg(Y) \le n-2$, then there exists a $q$-ary $[n, n-2-\deg(Y_1)]$-LCD MDS code.
\end{itemize}
\end{corollary}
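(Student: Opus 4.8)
The plan is to obtain both parts directly from Theorem \ref{thm:3.3} by forcing the function $y$ appearing there to be a perfect square. Observe that if $y=s^2$ for some nonzero $s\in\F_q(x)$ with $v_{P_i}(s)=0$ for all $i$, then $y(P_i)=s(P_i)^2$ is automatically a non-zero square, so hypothesis (iii) of Theorem \ref{thm:3.3} is satisfied. Since in the rational function field every divisor of degree zero is principal, it therefore suffices to produce a divisor $H$ with $\deg H=n-1$ and positive divisors $A,B$ obeying the disjointness condition (i) of Theorem \ref{thm:3.3}, such that the divisor $E:=2H-A-B-(z)-(n-2)P_\infty$ has all coefficients even and has coefficient $0$ at every place in $\Supp(D)$. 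Indeed $E$ then has degree $0$, so $E=2G$ with $\deg G=0$, hence $E=(s^2)$ for the $s$ with $(s)=G$, and $v_{P_i}(s)=\tfrac12 v_{P_i}(E)=0$; the code $C_L(D,H-A+(w))$ furnished by Theorem \ref{thm:3.3} is then $[n,n-\deg A]$ LCD MDS.

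Now write $(z)=Z-(\deg z)P_\infty$ and use the hypothesis $Z=2X+Y_1+Y_2$ to get $E=2(H-X)-A-B-Y_1-Y_2+(\deg z-n+2)P_\infty$; since $2(H-X)$ contributes only even coefficients, the parity of $E$ is controlled by $A+B+Y_1+Y_2+(n-2-\deg z)P_\infty$. The key numerical fact is that $\deg z=2\deg X+\deg Y$, hence $\deg z\equiv\deg Y\pmod 2$, which is exactly what lets us correct the coefficient at $P_\infty$ using only $P_\infty$. For part (i) take $A=Y_1+2A_0$ and $B=Y_2+(n-\deg Y-2m)P_\infty$, where $A_0\ge 0$ has degree $m$; then $A+B+Y_1+Y_2+(n-2-\deg z)P_\infty=2Y_1+2Y_2+2A_0+(2n-2m-2-\deg Y-\deg z)P_\infty$ has all coefficients even. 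One checks that $B\ge 0$ exactly when $m\le(n-\deg Y)/2$, that $\deg A+\deg B=n$ and $\deg A=\deg Y_1+2m$, and — using $P_\infty\notin\Supp(Z)$ together with $\Supp(Z)\cap\Supp(D)=\emptyset$ (Lemma \ref{lem:2.1} and Lemma \ref{lem:2.4}) — that $\Supp(A),\Supp(B),\Supp(D)$ are pairwise disjoint once $A_0$ is chosen with support outside $\Supp(D)\cup\Supp(Y)\cup\{P_\infty\}$; finally pick $H\ge 0$ of degree $n-1$ with support disjoint from $\Supp(D)\cup\Supp(Y)\cup\Supp(A_0)\cup\{P_\infty\}$. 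This yields an $[n,n-2m-\deg Y_1]$ LCD MDS code, the case $m=0$ being $A_0=0$. Part (ii) is the identical construction with $A=Y_1+2P$ (so $A_0=P$, of degree $1$) and $B=Y_2+(n-\deg Y-2)P_\infty$, which is positive precisely because $\deg Y\le n-2$, giving an $[n,n-2-\deg Y_1]$ LCD MDS code.

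The one point that requires care — and where the precise hypotheses are used — is the existence of the auxiliary positive divisors $A_0$ (of degree $m$) and $H$ (of degree $n-1$) with supports avoiding the prescribed finite set of places. Since $n\ge 3$ we have $\deg H=n-1\ge 2$, and the hypothesis ``$m\ge 2$ or $m=0$'' guarantees $\deg A_0\ne 1$; because $\F_q(x)$ has plenty of places of every degree $\ge 1$ and, for each fixed degree $\ge 2$, far more than the $O(n+\deg Y)$ places one must avoid (here $n\le q$ enters), one may simply take $A_0$ and $H$ to be single places of the required degrees. A degree-one choice, by contrast, cannot always be made — when $n=q$ the rational places are exhausted by $\Supp(D)\cup\{P_\infty\}$ — which is exactly why part (ii) postulates the place $P$ and part (i) forbids $m=1$. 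I expect this place-counting, together with keeping straight the small-$n$ and $n=q$ boundary cases, to be the only real obstacle; the substance of the argument is the parity computation above combined with Theorem \ref{thm:3.3}.
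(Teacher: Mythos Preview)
Your proposal is correct and follows essentially the same approach as the paper: choose $H$ to be a single place of degree $n-1$, take $A=Y_1+2A_0$ and $B=Y_2+(n-2m-\deg Y)P_\infty$ with $A_0$ a single place of degree $m$ (or $A_0=0$, respectively $A_0=P$), and observe that $2H-A-B-(z)-(n-2)P_\infty$ is then twice a degree-zero divisor, hence the divisor of a square; the paper writes this out as $2(H-A_0-X-Y-(n-1-m-\deg X-\deg Y)P_\infty)$ rather than via your parity bookkeeping, but the content is identical. Your disjointness checks are in fact slightly more complete than the paper's (you explicitly require $H$ to avoid $\Supp(Y)$, which the paper tacitly assumes), and your place-counting sketch, while less quantitative than the paper's explicit bound $(q^m-q^{\lfloor m/2\rfloor+1})/m$, is adequate.
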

  \begin{proof} First we prove {\rm (i)}. Choose $Q$ to be a place of degree $n-1$ and set $H=Q$. Since $n \ge 3$, we have $\Supp(H)\cap\Supp(D)=\emptyset$. If $m=0$, set $K=0$, which trivially implies that $\Supp(K)\cap\Supp(D)=\emptyset$, $\Supp(K)\cap\Supp(H)=\emptyset$ and $\Supp(K) \cap \Supp(Y_2)=\emptyset$. If $m \ge 2$, we set $K=R$, with $R$ a place of degree $m$. Since $m \neq 1$, we have $\Supp(K)\cap\Supp(D)=\emptyset$. Moreover, since $n \ge 3$, we have $m \le n/2 < n-1$, whence $\Supp(K)\cap\Supp(H)=\emptyset$. Next, we show that we can choose $R$ such that $R \not\in \Supp(Y_2).$
The set $\Supp(Y_2)$ contains at most $\deg(Y_2)/m$ places of degree $m$. Note that $$\frac{\deg(Y_2)}{m} \le \frac{\deg(Z)}{m} \le \frac{n-1}{m} \le \frac{q-1}{m}.$$ On the other hand, the number of places of degree $m$ is at least $(q^m-q^{\lfloor m/2 \rfloor+1})/m$ for $m>2$ and equal to $(q^2-q)/2$ for $m=2$. Hence, we see that we can choose the place $R$ of degree $m$ such that $R \not\in \Supp(Y_2).$

Now with such a choice of $R$, let $A=2K+Y_1$ and $B=Y_2+(n-2m-\deg (Y))P_{\infty}$. Then $$2H-A-B-(z)-(n-2)P_{\infty}=2(H-K-X-Y-(n-1-m-\deg(X)-\deg(Y))P_{\infty})$$ is equal to $(f^2)$ for some $f\in\F_q(x)$. The desired result follows from Theorem \ref{thm:3.3} since  $y(P_i)=f(P_i)^2$ are non-zero square elements of $\F_q$ for all $1\le i\le n$.

The proof of {\rm (ii)} is similar. We choose $K=P$.
\end{proof}

Note that if $n<q$ and $Y_2$ is chosen to be $0$, the place $P$ in part (ii) of the theorem is guaranteed to exist.

\begin{remark}{\rm
In characteristic two, the derivative of any polynomial is a square. In particular, the function $z$ is always a square, which implies that $Z=2X+Y$ for a positive divisor $X$ and $Y=0$, whence we can choose $Y_1=Y_2=0$. Corollary \ref{cor:3.5} then implies that in characteristic two, for any $0 \le m \le n/2$ and $3 \le n \le q$ there exists an $[n,n-2m]$-LCD MDS code. This result is already contained in \cite{Jin17}, where $[n,k]$-LCD MDS codes were found in even characteristic for any $0 \le n \le q$ and $0 \le k \le n$.}
\end{remark}

\begin{remark}{\rm
Suppose $q$ is odd and $n=q$. Then $g=x^q-x$ in our construction, implying that $Z=0$. Using Corollary \ref{cor:3.5} with $X=Y=Y_1=Y_2=0$, we obtain $[q,q-2m]$-LCD MDS codes for $m=0$ and $2 \le m \le q/2$. Considering the duals of these codes, we obtain $[q,2m]$ codes for $2 \le m \le q/2$. Combined, we see that one can obtain explicit $[q,k]$-LCD MDS codes for nearly all values of $k$ between $0$ and $q$, except $k \in \{2,q-2\}$.}
\end{remark}

Because of the above two remarks, our main interest is to construct explicit $[n,k]$-LCD MDS codes in case $q$ is odd and $n<q.$ We now give several explicit constructions, exemplifying the versatility of Theorem \ref{thm:3.3} and its two corollaries.

\begin{construction}\label{ex:3.6}{\rm  Let $q=p^r$ with a prime $p$ and integer $r\ge 1$.
   \begin{itemize}
\item[{\rm (i)}]  For any integer $t$ with $1\le t\le r$, let $V$ be an $\F_p$-subspace of $\F_q$ of dimension $t$. Put $g=\prod_{\Ga\in V}(x-\Ga)$. Then $z=\frac{dg}{dx}=\prod_{\Ga\in V\setminus\{0\}}\Ga$ is a nonzero constant. Hence we have $Z=0$. By Corollary \ref{cor:3.5}, there exists a $q$-ary $[p^t, p^t-2m]$-LCD MDS code for $m=0$ (which is trivial in this case) or $2\le m\le p^t/2$. If $t<r$, such a code exists for $0\le m\le p^t/2$. Considering dual codes as well, we see that if $p$ is odd and $t<r$, we can construct explicit $[p^t,k]$-LCD MDS codes for any $k$ satisfying $0 \le k \le p^t$.

  \item[{\rm (ii)}] Let $n$ be a positive integer with $p|n$ and $(n-1)|(q-1)$. Put $g(x)=x^n-x.$ Then $g(x)$ has all roots in $\F_q$ and every root is simple. Furthermore, $z=\frac{dg}{dx}=-1$, whence $Z=X=Y=0$. Using Corollary \ref{cor:3.5}, we find an explicit $q$-ary $[n, n-2m]$-LCD MDS code for $m=0$ and $2\le m\le n/2$. If $n<q$ and $n$ is odd, we obtain explicit $[n,k]$-LCD MDS codes for any $k$ with $0 \le k \le n$ by considering the previous constructed codes and their duals.

      For example, $q=81$ and $n=21$. Then one obtains explicit $81$-ary $[21, k]$-LCD MDS codes for any $0\le k\le 21$.
   \end{itemize}
}\end{construction}

  \begin{construction}\label{ex:3.7}{\rm  Let $q=p^r$ with a prime $p$ and integer $r\ge 1$.
  Let $q-1=\ell\cdot d$ and suppose that $p < d$. Further, let $U_{\ell}$ be the multiplicative subgroup of $\F_q^*$ of order $\ell$ and $\beta_1U_{\ell},\dots,\beta_p U_{\ell}$ be pairwise distinct cosets. Consider the polynomials $f(x):=\prod_{i=1}^p(x-\beta_i^\ell)$ and $g(x):=f(x^\ell)$. Then $g(x)$ has no multiple roots and $z=\frac{dg}{dx}=\ell x^{\ell-1}\frac{df}{dx}(x^\ell)$. Choosing $\beta_i=\gamma^i,$ with $\gamma$ a primitive element of $\F_q^*$, we have $\sum_{i=1}^p \beta_i^\ell \neq 0$ and $\frac{df}{dx}(P_0) \neq 0$. Hence in this case $\deg(z)=\ell-1+\ell(p-2)=\ell(p-1)-1$ and $Z=(\ell-1)P_0+E$ for some positive divisor $E$ of degree $\deg(E)=\ell(p-2)$ with $P_0 \not\in \Supp(E)$.

  If $\ell$ is even, we can choose $X=(\ell-2)/2P_0$ and $Y=P_0+E$. Choosing $Y_1=Y$ and $Y_2=0$, we obtain from Corollary \ref{cor:3.5} explicit $[p\ell,2\ell-1-2m]$-LCD MDS codes for $0 \le m \le \ell-1$. Choosing $Y_1=E$ and $Y_2=P_0$, we obtain from Corollary \ref{cor:3.5} explicit $[p\ell,2\ell-2m]$-LCD MDS codes for $0 \le m \le \ell-1$. Note that the assumption $p<d$ implies that $p\ell \le (d-1)\ell=q-1-\ell \le q-2$. Therefore a place $P$ as in part (ii) of Corollary \ref{cor:3.5} exists.

  If $\ell$ is odd, we can choose $X=(\ell-1)/2P_0$ and $Y=E$.  Choosing $Y_1=E$ and $Y_2=0$, we construct $[p\ell, 2\ell-2m]$-LCD MDS codes for $0 \le m \le \ell$.
   }\end{construction}

  \begin{construction}\label{ex:3.8}{\rm Let $q\ge 5$ be an odd square. Let $n-1$ be an odd divisor of $q-1$ and $n<q$. Then $n-1\le (q-1)/2$, i.e., $n\le (q-1)/2+1$.  Let $g(x)=x^n-x$. Then $g(x)$ has no multiple roots and splits completely in $\F_q$. Furthermore, $z(x)=\frac{dg}{dx}=nx^{n-1}-1$. Label  the roots of $g(x)$ by $\Ga_1,\dots,\Ga_n$. Then we have $z(\Ga_i)=n-1$ or $-1$ for all $1\le i\le n$. Since $q$ is a square, we can find $\theta,\delta\in\F_q$ such that $\theta^2=n-1$ and $\delta^2=-1$.

  Now for any $1\le m\le n/2$, choose two distinct  elements $\Gb,\Gg\in \F_q$ that are not  roots of $g(x)$ (this is possible since $n\le (q-1)/2+1\le q-2$). Put  $a(x)=(x-\Gb)^{2m}$ and  $b(x)=(x-\Gg)^{n-2m}$. Then $(abz)(\Ga)$ is equal to $(\theta(\Ga-\Gb)^m(\Ga-\Gg)^{n/2-m})^2$ or $(\delta(\Ga-\Gb)^m(\Ga-\Gg)^{n/2-m})^2$ for all roots $\Ga$ of $g(x)$. Thus, by Corollary \ref{cor:3.4} we obtain a $q$-ary $[n,n-2m]$-LCD MDS code  for all $m\le n/2$.
   }\end{construction}

 \begin{construction}\label{ex:3.9}{\rm  Let $q$ be odd and let $n$ be a divisor of $(q-1)/2$. Let $g(x)=x^n-1$. Then every root of $g(x)$ is a square element of $\F_q$. It is clear that $g(x)$ has no multiple roots and splits completely in $\F_q$. Furthermore, $z(x)=\frac{dg}{dx}=nx^{n-1}$.

For any $1\le m\le n/2$, choose two distinct  elements $\Gb,\Gg\in \F_q$ that are not  roots of $g(x)$ (this is possible since $n\le  q-2$). Put
 \begin{eqnarray*}
 (a(x),b(x))=\left\{\begin{array}{ll}
(n(x-\Gb)^{2m},(x-\Gg)^{n-2m})&\mbox{if $n$ is even}\\
(nx(x-\Gb)^{2m},(x-\Gg)^{n-1-2m})&\mbox{if $n$ is odd}
\end{array}
 \right.
 \end{eqnarray*}
 For every root $\Ga$ of $g(x)$, let $\Ga=\Ga_1^2$ for some $\Ga_1\in\F_q$.
  Then
   \begin{eqnarray*}
 (abz)(\Ga)=\left\{\begin{array}{ll}
(n\Ga_1^{n-1}(\Ga-\Gb)^m(\Ga-\Gg)^{n/2-m})^2&\mbox{if $n$ is even}\\
(n\Ga_1^{n}(\Ga-\Gb)^m(\Ga-\Gg)^{(n-1)/2-m})^2&\mbox{if $n$ is odd}
\end{array}
 \right.
 \end{eqnarray*}
   Thus, by Corollary \ref{cor:3.4} we obtain a $q$-ary $[n,n-2m]$-LCD MDS code  for even $n$ and a $q$-ary $[n,n-2m-1]$-LCD MDS code for odd $n$.
   }\end{construction}

\begin{construction}\label{ex:3.10}
{\rm
Let $q=p^r$ for some odd prime power $p$ and integer $r>1$. Define $N_r:=(p^r-1)/(p-1).$ Further define the polynomial $g=( (x+1)^{N_r}-1 )/x$. Then $g$ is a polynomial of degree $n:=N_r-1=p N_{r-1}$ with simple roots, all in $\F_q$. A direct computation shows that $z=-x^{p-2}\left(((x+1)^{N_{r-1}}-1)/x\right)^p,$ implying that $\deg(z)=n-2$. Moreover, we see that $Z=(p-2)P_0+p\sum_{i=1}^s Q_i$, where $Q_1,\dots,Q_s$ denote the zeroes of $(x+1)^{N_{r-1}}-1$ different from $P_0.$ Note that $z$ has all its roots in $\F_{p^{r-1}}$. Since $q$ is odd, we can choose $Y=P_0+\sum_{i=1}^s Q_i$, which is a divisor of degree
$$\deg(Y)=1+\sum_{i=1}^s \deg(Q_i)=1+\frac{\deg(Z)-p+2}{p}=1+\frac{n-p}{p}=N_{r-1}.$$
Choosing $Y_1=0$, we obtain using Corollary \ref{cor:3.5} a $q$-ary $[p N_{r-1},p N_{r-1}-2m]$-LCD MDS code for $0 \le m \le (p^{r-1}-1)/2.$ Note that since $z$ has its roots in $\F_{p^{r-1}}$, there will be a place of degree one, satisfying the conditions in part (ii) of  Corollary \ref{cor:3.5}. Choosing $Y_1=P_0$, we similarly obtain a $q$-ary $[p N_{r-1},p N_{r-1}-2m-1]$-LCD MDS code for $0 \le m \le (p^{r-1}-1)/2.$

Combining these two results, we see that we have obtained an explicit $q$-ary $[p N_{r-1},k]$-LCD MDS code for $N_{r-1}-1 \le k \le p N_{r-1}.$ Considering the duals of these codes, we can strengthen this conclusion to the statement that we obtain an explicit $q$-ary $[p N_{r-1},k]$-LCD MDS code for $0 \le k \le p N_{r-1}.$
}
\end{construction}

\begin{construction}\label{ex:3.11}
{\rm
Suppose $q=p^r$ and let $V \subset \F_q$ be an $\F_p$ vector space of dimension $t < r$. We define $g_V(x):=\prod_{\alpha \in V}(x-\alpha)$. Further let $\alpha_1+V,\dots,\alpha_d+V$ be mutually distinct cosets of $V$ in $\F_q$. This property can also be expressed by saying that $g_V(\alpha_1),\dots,g_V(\alpha_d)$ are mutually distinct. It is clear that $1 \le d \le p^{r-t}.$ Now define $f:=\prod_{i=1}^d(x-g_V(\alpha_i))$ and set $g:=f(g_V(x))$. The polynomial $g$ has no multiple roots and its roots are exactly the $dp^t$ elements of $\F_q$ occurring in the cosets $\alpha_1+V,\dots,\alpha_d+V$.

We see that $z=\frac{df}{dx}(g_V(x))\cdot \prod_{\alpha \in V\backslash\{0\}}\alpha.$ Hence $\deg(z)=\deg\left(\frac{df}{dx}\right)p^t$. If $\gcd(d,q)=1$, we can conclude that $\deg(z)=(d-1)p^t$. Choosing $Y=Z$ and $Y_1=0$, we see from Corollary \ref{cor:3.5} that if $\gcd(d,p)=1$, we obtain $q$-ary $[dp^t,dp^t-2m]$-LCD MDS codes for $2 \le m \le p^t/2.$
}
\end{construction}

Collecting the results from the explicit constructions given above, Theorem \ref{thm:main} follows immediately.

\section{Acknowledgements}
The first author gratefully acknowledges the support from The Danish Council for Independent Research (Grant No.~DFF--4002-00367).
The second author acknowledges the support from the Open Research Fund of National Mobile Communications Research Laboratory, Southeast University (No. 2017D07), and the National Natural Science Foundation of China under Grant 11501117.
The authors would also like to thank Prof. Chaoping Xing for pleasant discussions concerning this work.

\vspace{2ex}
\noindent
Peter Beelen\\
Technical University of Denmark,\\
Department of Applied Mathematics and Computer Science,\\
Matematiktorvet 303B,\\
2800 Kgs. Lyngby,\\
Denmark,\\
pabe@dtu.dk

\vspace{2ex}
\noindent
Lingfei Jin\\
Fudan University,\\
School  of Computer Science,\\
825 Zhangheng Road, Shanghai, \\
China,\\
lfjin@fudan.edu.cn

\end{document}